\newcolumntype{+}{!{\vrule width 2pt}}
\newlength\savedwidth
\renewcommand{\@biblabel}[1]{\quad#1.}
\newcommand{\fabio}{\textcolor{red}}
\newcommand{\problemname}{Target Associated $k$-Set}
\newcommand{\algname}{UNCOVER}
\newcommand{\E}{\mathbf{E}}
\newcommand{\BOM}[1]{{\Omega}\left(#1\right)}
\newtheorem{theorem}{Theorem}
\newtheorem{proposition}{Proposition}
\begin{document}
\vspace*{0.2in}

\begin{flushleft}
{\Large
\textbf\newline{Efficient algorithms to discover alterations with complementary functional association in cancer} 
}
\newline
\\
Rebecca Sarto Basso\textsuperscript{1},
Dorit S. Hochbaum\textsuperscript{1},
Fabio Vandin\textsuperscript{2,3,4*}
\\
\bigskip
\textbf{1} Department of Industrial Engineering and Operations Research, University of California at Berkeley, CA, USA.
\\
\textbf{2} Department of Information Engineering, University of Padova, Padova, Italy.
\\
\textbf{3} Department of Computer Science, Brown University, Providence, RI, USA.
\\
\textbf{4} Department of Mathematics and Computer Science, University of Southern Denmark, Odense, Denmark.
\\
\bigskip

%
%





* Corresponding author: fabio.vandin@unipd.it

\end{flushleft}
\section*{Abstract}

Recent large cancer studies have measured somatic alterations in an unprecedented number of tumours. These large datasets allow the identification of cancer-related sets of genetic alterations by identifying relevant combinatorial patterns. Among such patterns, \emph{mutual exclusivity} has been employed by several recent methods that have shown its effectivenes in characterizing  gene sets associated to cancer.  Mutual exclusivity arises because of the complementarity, at the functional level, of alterations in genes which are part of a group (e.g., a \emph{pathway}) performing a given function. The availability of quantitative target profiles, from genetic perturbations or from clinical phenotypes, provides additional information that can be leveraged to improve the identification of cancer related gene sets by discovering groups with complementary functional associations with such targets.

In this work we study the problem of finding groups of mutually exclusive alterations associated with a quantitative (functional) target. We propose a combinatorial formulation for the problem, and prove that the associated computation problem is computationally hard. We design two algorithms to solve the problem and implement them in our tool \algname. We provide analytic evidence of the effectiveness of \algname\ in finding high-quality solutions and show experimentally that \algname\ finds sets of alterations significantly associated with functional targets in a variety of scenarios. In particular, we show that our algorithms find sets which are better than the ones obtained by the state-of-the-art method, even when sets are evaluated using the statistical score employed by the latter. In addition, our algorithms are much faster than the state-of-the-art, allowing the analysis of large datasets of thousands of target profiles from cancer cell lines. We show that on one such dataset from project Achilles our methods identify several significant gene sets with complementary functional associations with targets. Software available at: \url{https://github.com/VandinLab/UNCOVER}.

\section*{Introduction}
Recent advances in sequencing technologies now allow to collect genome-wide measurements in large cohorts of cancer patients (e.g.,~\cite{Brennan:2013aa,Cancer-Genome-Atlas-Network:2015aa,Cancer-Genome-Atlas-Research-Network:2013aa,Cancer-Genome-Atlas-Research-Network:2014aa,cancer2017integrated,cancer2017integratedB}). In particular, they allow the measurement of the entire complement of somatic (i.e., appearing during the lifetime of an individual) alterations in all samples from large tumour cohorts. The study of such alterations has lead to an unprecedented  improvement in our understanding of how tumours arise and progress~\cite{garraway2013lessons}. One of the main remaining challenges is the interpretation of such alterations, in particular identifying alterations with functional impact or with relevance to therapy~\cite{mcgranahan2017clonal}.

Several computational and statistical methods have been recently designed to identify \emph{driver} alterations, associated to the disease, and to distinguish them from random, \emph{passenger} alterations not related with the disease~\cite{pmid24479672}. The identification of genes associated with cancer is complicated by the extensive \emph{intertumour heterogeneity}~\cite{vandin2017computational}, with large (100-1000's) and different collections of alterations being present in tumours from different patients and no two tumours having the same collection of alterations~\cite{vogelstein2013cancer,vandin2017computational}. Two main reasons for such heterogeneity are that \begin{inparaenum}[i)]
\item most mutations are passenger, \emph{random} mutations, and,  more importantly, 
\item driver alterations target cancer \emph{pathways}, 
\end{inparaenum}
groups of interacting genes that perform given functions in the cell and whose alteration is required to develop the disease. Several methods have been designed to identify cancer genes using \textit{a-priori} defined pathways~\cite{vaske2010inference} or interaction information in the form of large interaction networks~\cite{Leiserson:2015aa,creixell2015pathway}.

Recently several methods (see Section~\ref{sec:relwork}) for the \emph{de novo} discovery of mutated cancer pathways have leveraged the \emph{mutual exclusivity} of alterations in cancer pathways. Mutual exclusivity of alterations, with sets of genes displaying at most one alteration for each patient, has been observed in various cancer types~\cite{kandoth2013mutational,vogelstein2013cancer,garraway2013lessons,hanahan2011hallmarks}.  The mutual exclusivity property is due to the complementarity of genes in the same pathway, with alterations in different members of a pathway resulting in a similar impact at the functional level. Mutual exclusivity has been successfully used to identify cancer pathways in large cancer cohorts~\cite{kandoth2013mutational,Ciriello:2012ly,leiserson2015comet}.

An additional source of information that can be used to identify genes with complementary functions are quantitative measures for each samples such as: functional profiles, obtained for example by genomic or chemical perturbations~\cite{pmid25984343,pmid27260156,pmid28753430};  clinical data describing, obtained for example by (quantitative) indicators of response to therapy; activation measurements for genes or sets of genes, as obtained for example by single sample scores of Gene Set Enrichment Analysis~\cite{pmid12808457,pmid16199517}. The employment of such quantitative measurements is crucial to identify meaningful complementary alterations since one can expect mutual exclusivity to reflect in functional properties (of altered samples) that are specific to the altered samples.

\subsection*{Related work}
\label{sec:relwork}

Several recent methods have used mutual exclusivity signals to identify sets of genes important for cancer~\cite{yeang2008combinatorial}. RME~\cite{Miller:2011ve} identifies mutually exclusive sets using a score derived from information theory.  Dendrix~\cite{Vandin:2012ys} defines a combinatorial gene set score and uses a Markov Chain Monte Carlo (MCMC) approach for identifying mutually exclusive gene sets altered in a large fraction of the patients. Multi-Dendrix~\cite{leiserson2013simultaneous} extends the score of Dendrix to multiple sets and uses an integer linear program (ILP) based algorithm to simultaneously find multiple sets with mutually exclusive alterations. CoMET~\cite{leiserson2015comet} uses a generalization of Fisher exact test to higher dimensional contingency tables to define a score to characterize mutually exclusive gene sets altered in relatively low fractions of the samples. WExT~\cite{leiserson2015comet} generalizes the test from CoMET to incorporate individual gene  weights (probabilities) for each alteration in each sample. WeSME~\cite{kim2016wesme} introduces a test that incorporates the alteration rates of patients and genes and uses a fast permutation approach to assess the statistical significance of the sets. TiMEx~\cite{constantinescu2015timex} assumes a generative model for alterations and defines a test to assess the null hypothesis that mutual exclusivity of a gene set is due to the interplay between waiting times to alterations and the time at which the tumor is sequenced. MEMo~\cite{Ciriello:2012ly} and the method from~\cite{babur2015systematic} employ mutual exclusivity to find gene sets, but use an interaction network to limit the candidate gene sets. The method by \cite{raphael2015simultaneous} and PathTIMEx~\cite{cristea2016pathtimex} introduce an additional dimension to the characterization of inter-tumor heterogeneity, by reconstructing the order in which mutually exclusive gene sets are mutated. None of these methods take quantitative targets into account in the discovery of significant gene sets and sets showing high mutual exclusivity may not be associated with target profiles (Fig.~\ref{fig:motivation}).

\begin{figure}
\begin{center}
\includegraphics[width=0.8\textwidth]{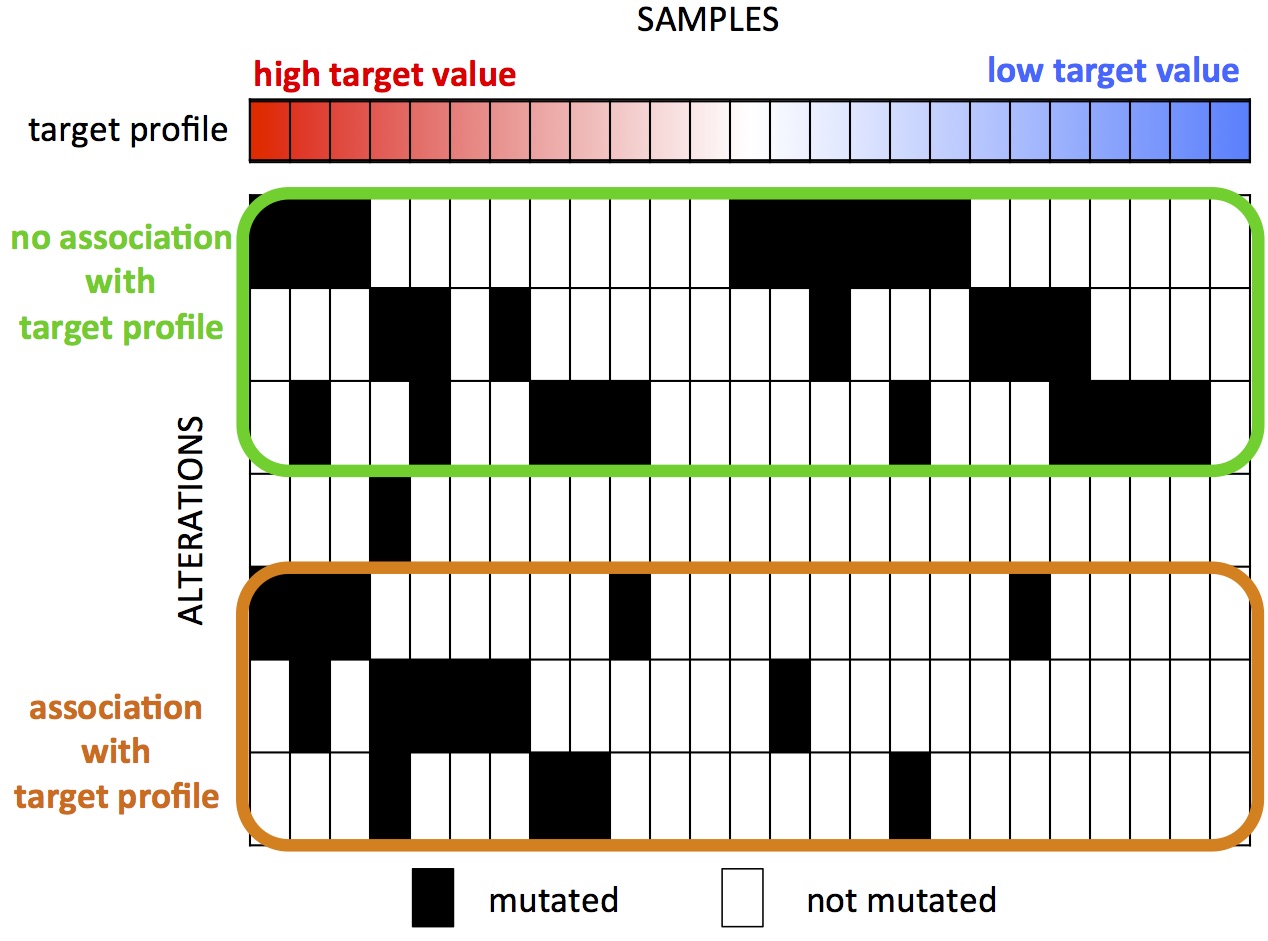}
\caption{\textbf{Identification of mutually exclusive alterations associated with a target profile.}\\ Alterations in the green set have high mutual exclusivity but no association with the target profile. Alterations in the orange set have lower mutual exclusivity but strong association with the target profile. Methods that find mutually exclusive sets of alterations without considering the target profile will identify the green set as the most important gene set.}
\label{fig:motivation}
\end{center}
 \end{figure}

\cite{revealer} recently developed the repeated evaluation of variables conditional entropy and redundancy (REVEALER) method, to identify mutually exclusive sets of alterations associated with functional phenotypes. REVEALER uses as objective function (to score a set of alterations) a re-scaled mutual information metric called \emph{information coefficient} (IC).  REVEALER employs a greedy strategy, computing at each iteration the conditional mutual information (CIC) of the target profile and each feature, conditioned on the current solution. REVEALER can be used to find sets of mutually exclusive alterations starting either from a user-defined seed for the solution or from scratch, and~\cite{revealer} shows that REVEALER finds sets of meaningful cancer-related alterations.

\subsection*{Our contribution}

In this paper we study the problem of finding sets of alterations with complementary functional associations using alteration data and a quantitative (functional) target measure from a collection of cancer samples. Our contributions in this regard are fourfold. First,
we provide a rigorous combinatorial formulation for the problem of finding groups of mutually exclusive alterations associated with a quantitative target and prove that the associated computational problem is NP-hard. Second, we develop two efficient algorithms, a greedy algorithm and an ILP-based algorithm to identify the set of $k$ genes with the highest association with a target; our algorithms are implemented in our method fUNctional Complementarity of alteratiOns discoVERy (\algname). Third, we show that our algorithms identify highly significant sets of genes in various scenarios; in particular, we compare \algname\ with REVEALER on the same datasets
used in~\cite{revealer}, showing that \algname\ identifies solutions of  higher quality than REVEALER while being on average two order of magnitudes faster than REVEALER. Interestingly, the solutions obtained by \algname\ are better than the ones obtained by REVEALER even when evaluated using the objective function (IC score) optimized by REVEALER.
Fourth, we show that the efficieny of \algname\ enables the analysis of a large dataset from Project Achilles with thousands of target measurements and tens of thousands of alterations. On such dataset \algname\ identifies identifies several statistically significant associations between target values and mutually exclusive alterations in genes sets
sets,  with an enrichment in well-known cancer genes and in known cancer pathways.

\section*{Materials and methods}

This section describes the problem we study and the algorithms we designed to solve it, that are implemented in our tool \algname. We also describe the data and computational environment for our experimental evaluation.

\subsection*{\algname: Functional complementarity of alterations discovery}

The workflow of our algorithm \algname\ is presented in Fig.~\ref{fig:workflow}. \algname takes in input information regarding \begin{inparaenum}
\item the alterations measured in a number of samples (e.g., patients or cell lines), and
\item the value of the \emph{target} measure for each patient.
\end{inparaenum}
\algname\ then identifies the set of mutually exclusive alterations with the highest association to the target, and employs a permutation test to assess the significance of the association. Details regarding the computational problem and the algorithms used by \algname\ are described in the following sections. The implementation of \algname\ is available at \url{https://github.com/VandinLab/UNCOVER}.

\begin{figure}
\begin{center}
\includegraphics[width=\textwidth]{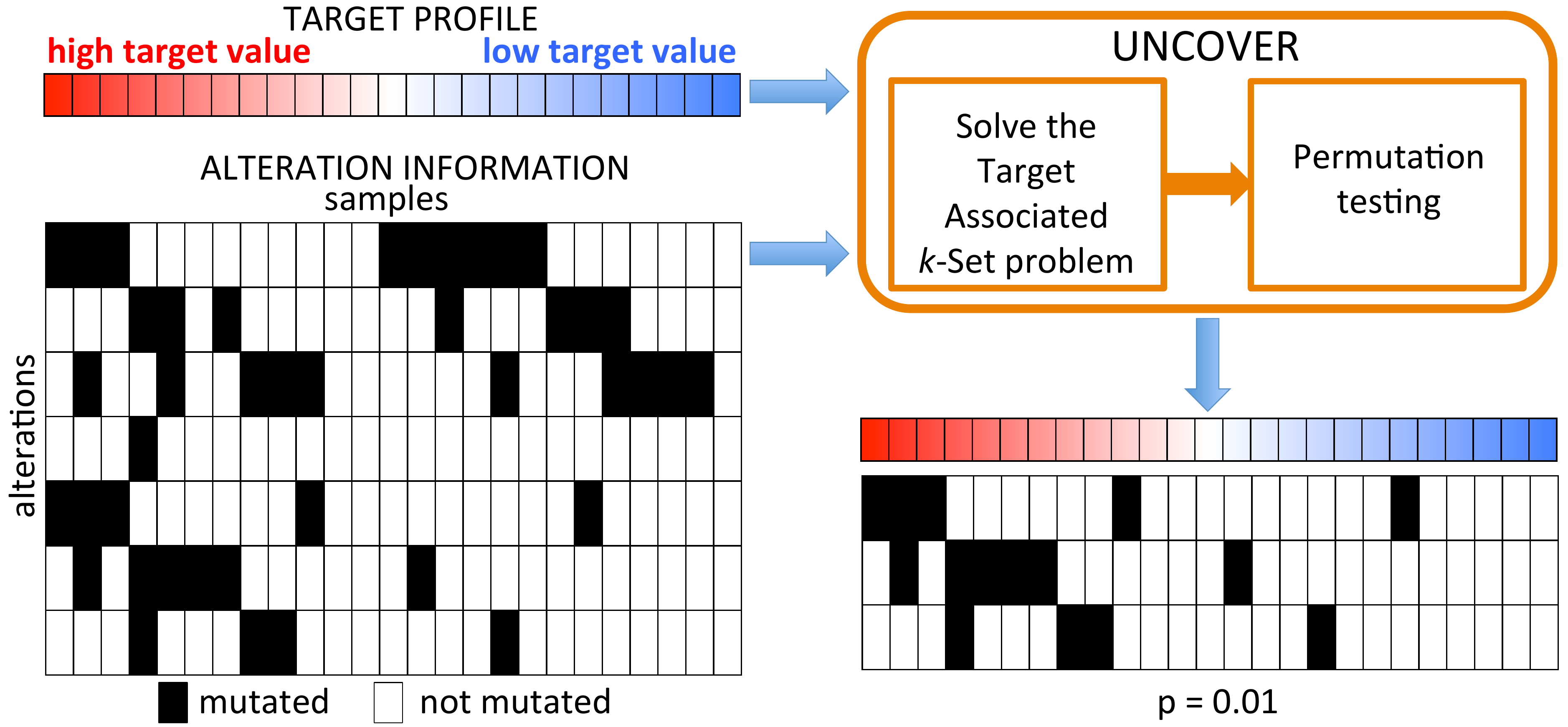}
\caption{\textbf{\algname: Functional complementarity of alterations discovery.}\\\algname\ takes in input the alterations information and a target profile for a set of samples, and identifies the set of complementary alterations with the highest association to the target by solving the \problemname\ problem and performing a permutation test.}
\label{fig:workflow}
\end{center}
 \end{figure}
 
\subsubsection*{Computational problem}
\label{sec:problem}

Let $J=\{j_1,\dots,j_m\}$ be the set of samples and let $G = \{g_1,\dots, g_n\}$ be the set of genes for which we have measured alterations in $J$. We are also given a \emph{target profile}, that is for each sample $j \in J$ we have a target value $w_j \in \mathbb{R}$ which quantitatively measures a functional phenotype (e.g., pathway activation, drug response, etc.). For each sample $j\in J$ we also have information on whether each $g \in G$ is altered or not in $j$. 
Let $A_g$ be the set of patients in which gene $g \in G$ is mutated. We say that a patient $j \in J$ is covered by gene $g\in G$ if $j \in A_g$ i.e. if gene $g$ is mutated in sample $j$. Given a set of genes $S \subset G$, we say that sample $j \in J$ is covered by $S$ if $j \in \cup_{g \in S} A_g$.

The goal is to identify a set $S$ of at most $k$ genes, corresponding to $k$ subsets $S_1, S_2,\dots S_k$ where for each subset $S_i$ we have that $S_i \subseteq J$, such that the sum of the weights of the elements covered by $S$ is maximized. We also penalize overlaps between sets when an element is covered more than once by $S$ by assigning a penalty $p_j$ for each of the additional times $j$ is covered by $S$. As penalty we use the positive average of the normalized target values if the original weight of the element was positive. If the original weight of the element was negative we assign a penalty equal to its weight.

Let $c_S(j)$ be the number of sets in $S_1, \dots, S_k$ that cover element $j \in J$.

Therefore for a set $S$ of genes, we define its weight $W(S)$ as:

\begin{equation*}
    W(S) = \sum_{j \in \cup_{s \in S}{A_s}} w_j - \sum_{j \in \cup_{s \in S}{A_s}}(c_S(j)-1) p_j
\end{equation*}

\begin{problem}{The \problemname\ problem}
\fabio{Given a set $J$ of samples, sets $A_{g_1},\dots,A_{g_n}$ describing alterations of genes $G=\{g_1,\dots,g_n\}$ in the set $J$, weights $w_j$ and penalties $p_j>0$ for each sample $j \in J$ find the $S$ of $\le k$ elements maximizing $W(S)$.}
\end{problem}

The following results defines the computational hardness of the problem above.

\begin{theorem}
\label{thm:nphard}
The \problemname\  problem is NP-hard.
\end{theorem}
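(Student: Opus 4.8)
The plan is to prove NP-hardness by a polynomial-time reduction from the \emph{Maximum Coverage} problem, which is well known to be NP-hard: given a universe $U$, a family of subsets $B_1,\dots,B_n \subseteq U$, and an integer $k$, decide whether some $k$ of the $B_i$ cover at least $t$ elements of $U$. The guiding intuition is that the \problemname\ problem contains \emph{Maximum Coverage} as the special case in which the penalties are made negligibly small: with vanishing penalties $W(S)$ reduces to the total weight of the covered samples, and with unit weights this is exactly the number of covered elements.

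Concretely, given a \emph{Maximum Coverage} instance I would build a \problemname\ instance by setting $J = U$, taking one gene $g_i$ per set with $A_{g_i} = B_i$, assigning every sample unit weight $w_j = 1$, and assigning every sample the same penalty $p_j = \epsilon$ for a suitably small $\epsilon > 0$ (this respects the requirement $p_j>0$ in the formal problem statement). To keep the encoding polynomial I would take $\epsilon = 1/(k\,|J|+1)$, which has a representation of polynomially many bits.

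The correctness argument rests on rewriting the objective. For any $S$ let $C(S) = |\cup_{s\in S} A_s|$ be the coverage and $O(S) = \sum_{s \in S}|A_s| - C(S) \ge 0$ the total overlap; using the identity $\sum_{j\text{ covered}} c_S(j) = \sum_{s\in S}|A_s|$ one checks that with these parameters
\begin{equation*}
W(S) = C(S) - \epsilon\, O(S).
\end{equation*}
Since at most $k$ genes are chosen and each $A_{g_i} \subseteq J$, we have $0 \le O(S) \le k\,|J|$, hence $0 \le \epsilon\, O(S) < 1$. Because $C(S)$ is an integer, this strict bound guarantees that any $S$ maximizing $W$ also maximizes $C$: a solution whose coverage is smaller by at least one cannot be compensated by the sub-unit penalty term. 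Consequently the maximum coverage equals $\lceil \max_S W(S)\rceil$, so the answer to the \emph{Maximum Coverage} instance is ``yes'' precisely when $\max_S W(S) > t-1$, and NP-hardness transfers to \problemname.

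The step I expect to require the most care is exactly this perturbation argument: one must pick $\epsilon$ small enough that the penalty can never flip the relative order of two solutions with different coverage, yet with a short binary representation so that the instance stays polynomial, and then phrase the decision threshold so that the possible non-integrality of $W$ is handled cleanly. A thematically natural alternative, which instead exercises the penalty mechanism directly, is to take $\epsilon$ very large so that overlaps become prohibitively expensive; optimal solutions are then forced to be pairwise disjoint (mutually exclusive), and the reduction goes through from maximum-weight $k$-Set Packing. Either route establishes the theorem, and I would present the small-penalty version as the primary argument since its correctness proof is the most transparent.
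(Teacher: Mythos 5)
Your reduction is correct, but it takes a genuinely different route from the paper. The paper reduces from the Maximum Weight Submatrix Problem (the Dendrix objective $|\Gamma(M)|-\omega(M)$, shown NP-hard in~\cite{pmid21653252}): setting $w_j=p_j=1$ makes $W(S)=2|\Gamma(S)|-\sum_{g\in S}|A_g|$ coincide \emph{exactly} with that objective, so the reduction is a one-line identification with no threshold or perturbation argument needed. You instead reduce from the textbook Maximum Coverage problem by making the penalties vanishingly small, $p_j=\epsilon=1/(k|J|+1)$, so that $W(S)=C(S)-\epsilon\,O(S)$ with $0\le\epsilon\,O(S)<1$, and the integrality of $C(S)$ guarantees that maximizing $W$ maximizes coverage; your accounting of the identity $\sum_{j}c_S(j)=\sum_{s\in S}|A_s|$, the choice of a polynomially-representable $\epsilon$, and the decision threshold $\max_S W(S)>t-1$ are all sound, and using ``at most $k$'' sets is harmless for Maximum Coverage. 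What each approach buys: the paper's argument is shorter and shows hardness already in the arguably most natural parameter regime $p_j=w_j=1$, but it leans on the NP-hardness of a less standard problem from the cancer-genomics literature; yours is self-contained relative to classical complexity results and shows that hardness persists even as the penalty term becomes negligible, at the cost of the (correctly handled) perturbation argument. Your sketched large-$\epsilon$ alternative via set packing would also work but is not needed.
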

\begin{proof}[Proof]
The proof is by reduction from the Maximum Weight Submatrix Problem (MWSP) defined and proved to be NP-hard in~\cite{pmid21653252}. The MSWP takes as input an $m\times  n$ binary matrix $A$ and an integer $k > 0$ and requires to find the $m \times k$ column sub-matrix $\hat{M}$ of $A$ that maximizes the objective function $|\Gamma(M)| - \omega(M)$, where $\Gamma(M)$ is the set of rows with at least one $1$ in columns of $M$ and $\omega(M)=\sum_{g \in M} |\Gamma(\{g\})| - |\Gamma(M)|$.

Given an instance of Maximum Weight Submatrix Problem, we define an instance of the \problemname\ as follow: the set of samples $J$ corresponds to the rows of $A$, the set of genes $G$ corresponds to the columns of $A$, and the set $S_g$ of samples covered by gene $g\in G$ is the subset of the rows in which $g$ has a 1. By setting $w_j = 1$ and $p_j = 1$ for all $j \in J$, we have that the objective function of MWSP corresponds to the weight $W(S)$ for the \problemname\, therefore the optimal solution of the \problemname\ corresponds to the optimal solution of MWSP.
\end{proof}

\subsubsection*{ILP formulation}
\label{sec:ILP}

In this subsection we provide an ILP formulation for the \problemname\ problem. Let $x_i$ be a binary variable equal to $1$ if set $i \in G$ is selected and $x_i=0$ otherwise. Let $z_j$ be a binary variable equal to $1$ if element $j$ is covered and $z_j=0$ otherwise. Let  $y_j$ denote the number of sets in the  solution covering element $j$. Finally, let $w_j$ be the weight of element $j$ and  $p_j$ be the penalty for element $j$

In our ILP formulation, the following constraints need to be satisfied by a valid solution:
\begin{itemize}
    \item the total number of sets in the solution is at most $k$: $\sum_{i} x_{i} \le k$
    \item for each element $j \in J$ we have:  $y_j=\sum_{i: j \in S_i} x_{i}$
    \item for each element $j \in J$, if $j$ is covered by the current solution then the number of times $j$ is covered in the solution is at least $1$:  $y_{j} \ge z_{j}$
    \item for each element $j \in J$, if $j$ is covered by at least one element in the current solution then $j$ is covered: $z_{j} \ge y_{j}/k$.
\end{itemize}

With the variables defined above, the score for a given solution is 
\begin{equation}
z(q) = \max \sum_{j=1}^m (w_{j}+p_j) z_{j}-\sum_{j=1}^m p_j y_{j}.
\end{equation}

$z(q)$ constitutes the objective function of our ILP formulation.

\subsubsection*{Greedy algorithm}
\label{sec:alg}

Since solving ILPs can be impractical for very large datasets, we also design a $k$-stage greedy algorithm to solve the \problemname\ problem. During each stage the algorithm picks 1 set $A_i$ to be part of the solution; this is done by first computing the total weight of each subset which is defined as the sum of the weights of its elements $W(A_i)=\sum_{j \in A_i} w_{j}$. Then the algorithm finds the subset $A_i$ of maximum positive weight and adds it to the solution. It may be that at some stage $\ell$ no additional set of positive weight can be selected, in this case, the solution obtained after stage $\ell-1$ will be our output.  
At the end of the iteration the weight of every element $j$ that belonged to the chosen set $A_i$ is set to the negative of penalty $p_j$, in order to penalize future selections of the same elements. The greedy algorithm is described in Algorithm~\ref{algo:change}.

\begin{algorithm}
\SetAlgoLined
\KwIn{A set of elements $J$ (samples), a class $I$ of subsets of $J$ (genetic alterations) and an integer $k$ (number of alterations we want to find). Each element $j \in J$  has an associated weight $w_j$ (target profile) and a penalty $p_j$.}
\KwOut{$k$  subsets, $S_1, S_2,\dots S_k$ , where each subset selected is a member of $I$, such that the sum of the weights of the elements in the selected sets is maximized and the overlap between selected sets is minimized.}
\For{$\ell\gets 1$ to $k$}{
    \lFor{$i \gets 1$ to $n$}{$W(A_i) \gets \sum_{j \in A_i} w_{j}$}
    $S_{\ell} \gets \arg\max_{A_i > 0}\{W(A_i)\}$\;
    \lFor{$j \in S_{\ell}$}{$w_j \gets -p_j$}
}
\textbf{return} $S_1 \dots S_k $\; 
\caption{{\sc Greedy} Coverage}
\label{algo:change}
\end{algorithm}

We note that our greedy algorithm is analogous to the greedy algorithm for the Maximum k-Coverage problem~\cite{kCoverage} with the difference that rather than eliminating the elements already selected we change their weight to a penalty. Also, assuming it is acceptable to return less than $k$ sets, we only pick a set if it has a positive weight. The running time of the algorithm is $O(kmn)$ where $m =$ number of samples and $n =$ number of alterations.

While the greedy algorithm may not return the optimal solution,
we prove that it provides guarantees on the weight of the solution it provides.

\begin{proposition}
Let $S^*$ the optimal solution of the \problemname\ and $\hat{S}$ be the solution returned by the greedy algorithm. Then $W(\hat{S}) \ge W(S^*)/k$.
\end{proposition}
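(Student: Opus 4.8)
The plan is to lower-bound $W(\hat{S})$ by the weight of the greedy algorithm's very first pick, and to upper-bound $W(S^*)$ by $k$ times the best single-gene weight; since the greedy's first pick is exactly the best single gene, the two bounds combine immediately to give $W(\hat{S})\ge W(S^*)/k$.

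First I would show that the greedy weight telescopes into a sum of per-stage gains. Let $g_i$ denote the value $W(A)$ of the set $A$ chosen at stage $i$, computed with the (possibly already modified) weights in force at that stage. The key accounting observation is that a sample $j$ contributes its original weight $w_j$ the first time it is covered and then $-p_j$ on each subsequent stage that covers it, so its total contribution across all stages is $w_j-(c_{\hat{S}}(j)-1)p_j$, which is precisely its contribution to $W(\hat{S})$. Summing over covered $j$ gives $W(\hat{S})=\sum_i g_i$. Because the algorithm only ever selects a set of strictly positive current weight, every $g_i>0$, and therefore $W(\hat{S})\ge g_1$. Moreover at stage $1$ no weight has yet been modified, so $g_1=\max_{g\in G}W(\{g\})$, where $W(\{g\})=\sum_{j\in A_g}w_j$.

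The main step is then the subadditivity bound $W(S^*)\le\sum_{g\in S^*}W(\{g\})$. I would establish the exact identity $W(S)=\sum_{g\in S}W(\{g\})-\sum_{j}(c_S(j)-1)(w_j+p_j)$ valid for any gene set $S$, obtained by writing $\sum_{g\in S}W(\{g\})=\sum_j c_S(j)\,w_j$ and comparing it term by term with $W(S)=\sum_j\bigl(w_j-(c_S(j)-1)p_j\bigr)$. This is where the penalty structure enters, and I expect it to be the main obstacle: the correction term is nonnegative only because $c_S(j)-1\ge0$ together with $w_j+p_j\ge0$, and the latter is exactly what the prescribed penalties guarantee, since $p_j>0$ makes $w_j+p_j>0$ when $w_j\ge0$, while for $w_j<0$ the penalty is chosen to cancel the weight so that $w_j+p_j=0$. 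Granting $w_j+p_j\ge0$, the identity yields $W(S^*)\le\sum_{g\in S^*}W(\{g\})\le|S^*|\max_{g}W(\{g\})\le k\,g_1$, using $|S^*|\le k$ and $g_1\ge0$.

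Chaining the two bounds gives $W(\hat{S})\ge g_1\ge W(S^*)/k$, as claimed. Finally I would dispose of the degenerate case in which the greedy selects no set: then $\max_g W(\{g\})\le0$, so the subadditivity bound forces $W(S^*)\le\sum_{g\in S^*}W(\{g\})\le0$ while $W(\hat{S})=0$, and the inequality $W(\hat{S})\ge W(S^*)/k$ holds trivially for $k>0$.
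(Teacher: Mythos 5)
Your proof is correct and follows essentially the same route as the paper's: the first greedy pick equals the best single-gene weight, subadditivity plus pigeonhole bounds that below by $W(S^*)/k$, and subsequent picks only add nonnegative gain. Your version is in fact more careful than the paper's — the explicit identity $W(S)=\sum_{g\in S}W(\{g\})-\sum_j(c_S(j)-1)(w_j+p_j)$ makes precise the condition $w_j+p_j\ge 0$ that the paper's pigeonhole step silently relies on (and which holds under the paper's prescribed penalty scheme, though not under the bare problem statement with arbitrary $p_j>0$).
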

\begin{proof}

Note that the weight of subsets in the optimal solution $W(S^*)$ can only be lower compared to the original weight of the subsets, since the only weight update operation performed is to substitute positive weights of elements selected with a negative penalty. 

The first subset $\hat{S_1}$  selected by our algorithm is the set of maximum weight out of all subsets and therefore $W(\hat{S_1}) \ge W(S_{\ell}^*)$ for $\ell=1..k$. By the pigeonhole principle, one of these subsets in the optimal solution must cover at least $W(S^*)/k$ worth of elements. Thus $W(\hat{S_1}) \ge W(S^*)/k$.
Therefore the first subset selected by the algorithm already gives a $1/k$ approximation of the optimal solution. In subsequent iterations of the algorithm we only pick additional sets if they have a positive weight so our approximation ratio can only improve. 
\end{proof}

We also prove that the bound above is tight

\begin{proposition}
There are instances of the \problemname\ such that $W(\hat{S}) = W(S^*)/k$.
\end{proposition}

The proof is in the Supplementary Material.

While the proposition above is based on an extreme example, our experimental analysis shows that in practice the greedy algorithm works well and often identifies the optimal solution. We therefore analyze the greedy algorithm under a generative model in which there is a set $H$ of $k$ genes with mutually exclusive alterations associated with the target, while each genes $g  \in G \setminus H$ is mutated in sample $j$ with probability $p_g$ independently of all other events. We also assume that the weights $w_j$ are such that $\sum_{j\in J} w_j = 0$ and for each $j: |w_j \le 1|$. (In practice this is achieved by normalizing the target values before running the algorithm, by subtracting to each $w_j$ the average value $\sum_{j\in J} w_j/m$ and then dividing the result by the maximum of the absolute values of the transformed $w_j$'s.) Note that this last condition implies that $|p_j| \le 1$ for all $j$. We also assume that for genes in $H$ the following assumptions hold:
\begin{itemize}
    \item the set $H$ has an association with the target, i.e.: $\E[W(H)] \ge \frac{m}{c'}$ for a constant $c'\ge 1$.
    \item each gene of $H$ contributes to the weight of $H$, i.e. for each $S \subset H$ and each $g \in H \setminus S$ we have $\E[W(S \cup\{g\})] - \E[W(S)]\ge \frac{W(H)}{k c^{''}}$ for a constant $c^{''}\ge 1$.
\end{itemize}

The following shows that, if enough samples from the generative model are considered, the greedy algorithm finds the set $H$ associated with the target with high probability.

\begin{proposition}
If $m \in \BOM{k^2 \ln (n/\delta)}$ samples from the generative model above are provided to the greedy algorithm, then the solution of the greedy algorithm is H with probability $\ge \delta$.
\end{proposition}

The proof is in the Supplementary Material.

\subsubsection*{Statistical significance}
\label{sec:perm_test}

To assess the significance of the solution reported by our algorithms we use a permutation test in which the dependencies among alterations in various genes are maintained, while the association of alterations and the target is removed. In particular, a permuted dataset under the null distribution is obtained as follows: the sets $A_g, g\in \mathcal{G}$ are the same as observed in the data; the values of the target are randomly permuted across the samples. 

To  estimate the $p$-value for the solutions obtained by our methods we used the following standard procedure: \begin{inparaenum}[1)] \item we run an algorithm (ILP or greedy) on the real data $\mathcal{D}$, obtaining a solution with objective function $o_{\mathcal{D}}$; \item we generate $N$ permuted datasets as described above; \item we run the same algorithm on each permuted dataset; \item the $p$-value is the given by $(e+1)/(N+1)$, where $e$ is the number of permuted datasets in which our algorithm found a solution with objective function $\ge o_{\mathcal{D}}$.\end{inparaenum}

\subsection*{Data and computational environment}
\label{sec:data}

\paragraph{Alteration Data.} We downloaded data for the Cancer Cell Line Encyclopedia on $25^{th}$ September, 2017 from 
\url{http://www.broadinstitute.org/ccle}. In particular we used the mutation (single nucleotide variants) and copy number aberrations (CNAs) which are derived from the original Cancer Cell Line Encyclopedia (CCLE) mutations and CNA datasets. The file we used is \texttt{CCLE\_MUT\_CNA\_AMP\_DEL\_0.70\_2fold.MC.gct}. It consists of a binary (0/1) matrix across 1,030 samples and 48,270 gene alterations in the form of mutations, amplifications and deletions, with a 1 meaning that the alteration is present in a sample, and a 0 otherwise.

\paragraph{Target Data.} In terms of target values we use the same datasets used by~\cite{revealer} to compare the performance of \algname\ with REVEALER.
In particular we used the following files available through the Supplementary Material of~\cite{revealer}: \texttt{CTNBB1\_transcriptional \_reporter.gct}, which consists of measurements of a $\beta$-catenin reporter in 81 cell lines; \texttt{NFE2L2\_activation\_profile.gct}, which includes NFE2L2  enrichment profiles for 182 lung cell lines; \texttt{MEK\_inhibitor\_profile.gct}, which contains MEK-inhibitor PD-0325901 sensitivity profile in 493 cancer cell lines from the Broad Novartis CCLE14l; and \texttt{KRAS\_essentiality\_profile.gct}, which  corresponds to the feature KRAS from a subset of 100 cell lines from the Achilles project dataset. 
In all these cases we considered the same direction of association (positive or negative) between alterations and the target as in~\cite{revealer}. Since our algorithm is very efficient we then decided to run it on a large dataset from Project Achilles~ (\url{https://portals.broadinstitute.org/achilles}), that uses genome-scale RNAi and CRISPR-Cas9 perturbations to silence or knockout individual genes. In particular, we use the whole 2.4.2 Achilles dataset (\texttt{Achilles\_QC\_v2.4.3.rnai.Gs.gct}) available from the project website. This dataset provides phenotype values for 5711 targets, corresponding to genes silenced by shRNA. The phenotype values correspond to ATARiS~\cite{pmid23269662} gene (target) level scores, quantifying the cell viability when the target gene is silenced by shRNA. These scores are provided for 216 cell lines~\cite{pmid25984343}, with 205 of them appearing in CCLE.

\paragraph{Data Preprocessing.} To be consistent with REVEALER we discarded features with high or low frequency, in particular features present in less than 3 samples or more than 50 samples were excluded from our analyses. The only exception was the MEK-inhibitor example, where the high frequency threshold was changed to be 100 since the number of original samples was substantially higher (i.e., 493) for this case.  From the Achilles dataset we excluded targets that have at least one missing value, in particular in this case we exclude 21 of the 5711 sets of target scores.
In all our experiments we normalized the target values before running the algorithm, by subtracting to each weight $w_j$ the average value $\sum_{j\in J} w_j/m$ and dividing the result by the standard deviation of the (original) $w_j$'s, in order to have both positive and negative target values.

\paragraph{Simulated Data.} 
We investigated how effective  \algname\ is at finding selected alterations in a controlled setting, where the ground truth is known. We generated target values according to a normal distribution with mean 0 and standard deviation 1. We tested dataset with 200, 600, 1000 and 10000 samples. For each dataset we considered the 38002 gene alterations present in CCLE and for each of them we placed alterations in the samples independently of all other events with the same frequencies as they appear in CCLE. To be consistent with the preprocessing done on rel data we filtered alterations to only have alterations with frequencies between 0.1 and 0.25. We also generated a set $T$ of 5 features to have an association with the target values. This association was varied throughout the experiments to cover different percentages of positive and negative targets. In particular we generated the selected features to cover 100\%, 80\%, 60\%, 40\% of the positive target values and 5\%, 10\%, 15\%, 20\% of the negative target values respectively, chosing random subsets of samples with positive or negative target values. We will refer to the parameter indicating the percentage of samples with positive target values selected as $P$ and to the parameter for the percentage of samples with negative target values selected as $N$. We divided the number of targets covered by each of the 5 mutations equally.

\paragraph{Computing Environment and Solver Configuration.} To describe and solve an ILP we used AMPL 20150516 and CPLEX 12.6.3.  All parameters in CPLEX were left at their default values. We implemented our greedy algorithm in Python 3.6.1. We run our experiments (with the exception of experiments conducted on simulated data) on a MacBook Air with 1.7 GHz Intel Core i7 processor, 8 GB RAM and 500 GB of local storage. In order to make a time comparison with REVEALER we also run the R code provided in~\cite{revealer} on the same machine, using R 3.3.3. All the parameters were left at their given values except for the number of permutations used to calculate their p-value, which we changed in order to compare the running time of the methods excluding the time needed to compute $p$-values.
Experiments on simulated data were conducted on local nodes of a computing cluster. Each node had the following configuration: four 2.27 GHz CPUs, 5.71 GB RAM and 241 GB of storage. 

\section*{Results and discussion}

We tested \algname\ on a number of cancer datasets in order 
to compare its results with state-of-the-art algorithms and to test whether \algname\ allows the 
analysis of large datasets. In particular we used four datasets described in~\cite{revealer} to compare
\algname\ with REVEALER, and then performed a large scale analysis using targets from the Achilles project dataset and alterations from the Cancer Cell Line Encyclopedia (CCLE). 

\subsection*{Comparison with REVEALER}
\label{sec:REVEALER}

We run the greedy algorithm and the ILP from \algname\ on the same four datasets considered by the REVEALER publication~\cite{revealer}. We used the same values of $k$ used in~\cite{revealer}, that is $k=3$ for all the datasets, except from the KRAS dataset where $k=4$ was used. For each dataset we recorded the solution reported by the greedy algorithm, the solution reported by the ILP, the value of the objective functions for such solutions and the running time to obtain such solutions. For ILP solutions, we also performed the permutation test (see Materials and methods) to compute a $p$-value using $1000$ permutations. The results are reported in Table~\ref{Table1}, in which we also show the results from REVEALER (without initial seeds). Fig.~\ref{fig:uncover} shows alteration matrices and the association with the target for the solutions identified by \algname.

\begin{figure*}[h]
    \centering
    \includegraphics[width=\textwidth]{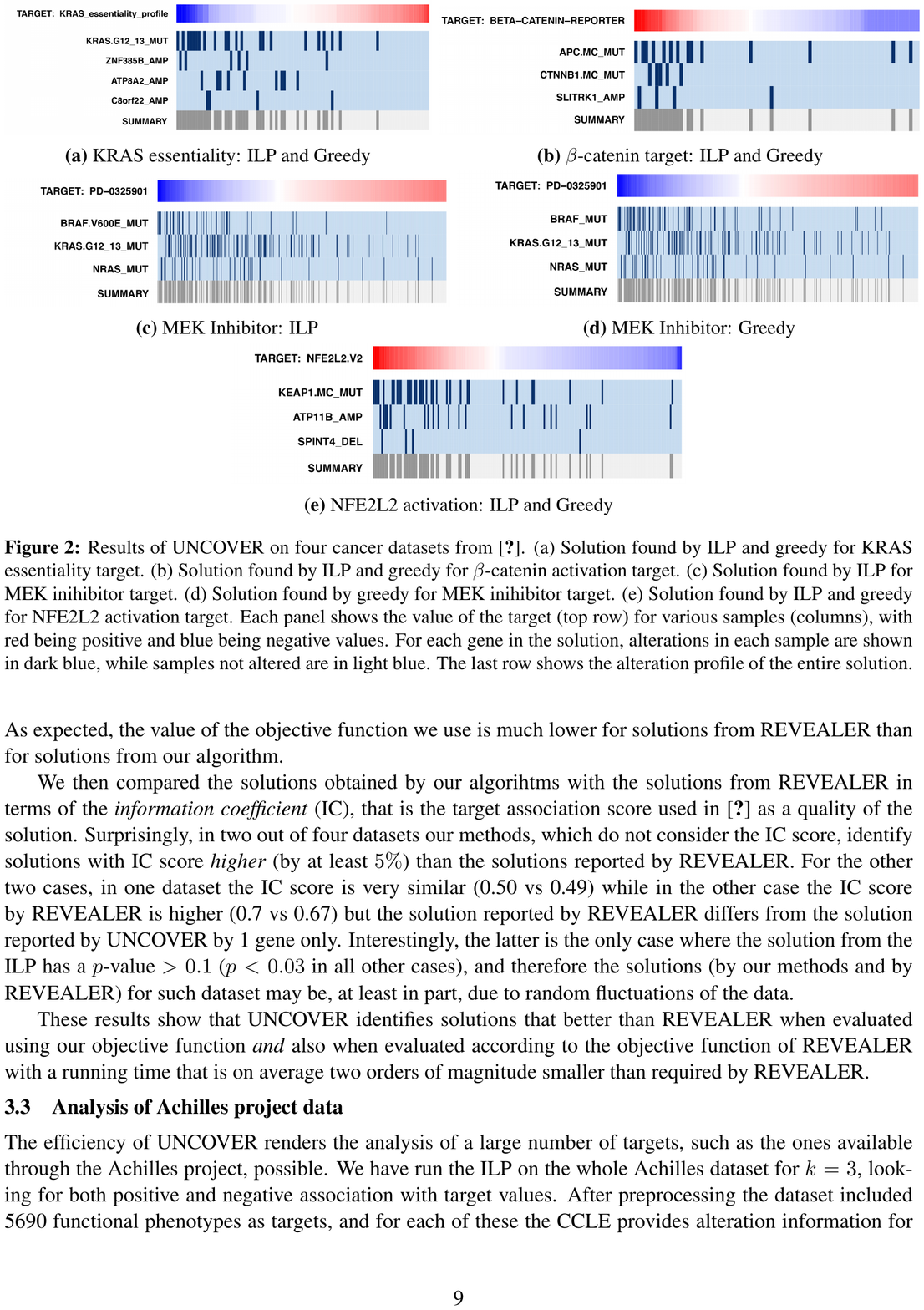}
    \caption{\textbf{Results of \algname  \ on four cancer datasets from~\cite{revealer}.}\\ (a) Solution found by ILP and greedy for KRAS essentiality target. (b) Solution found by ILP and greedy for $\beta$-catenin activation target. (c) Solution found by ILP for MEK inihibitor target. (d) Solution found by greedy for MEK inihibitor target. (e) Solution found by ILP and greedy for NFE2L2 activation target. Each panel shows the value of the target (top row) for various samples (columns), with red being positive and blue being negative values. For each gene in the solution, alterations in each sample are shown in dark blue, while samples not altered are in light blue. The last row shows the alteration profile of the entire solution.
    }
    \label{fig:uncover}
\end{figure*}

We can see that the greedy algorithm identifies the same solution of the ILP based algorithm in three out of four cases, and that the runtime of the ILP and the runtime of greedy algorithm are comparable and very low ($<40$ seconds) in all cases. In contrast, the running time of REVEALER is much higher ($>1000$ seconds in most cases).  (We included all preprocessing in the reported runtimes in table 1 to ensure a fair comparison with REVEALER; not including preprocessing our running times are all under 10 seconds.)  Comparing the alteration matrices of the solutions by \algname\ and the ones of solutions by REVEALER (Supplementary Fig.~\ref{fig:s1}) we note that alterations in solutions by \algname\ tend to have higher mutual exclusivity and to be more concentrated in high weight samples than alterations in solutions by REVEALER. As expected, the value of the objective function we use is much lower for solutions from REVEALER than for solutions from our algorithm.

\begin{table}[!ht]
\begin{adjustwidth}{-2.25in}{0in} 
\centering
\caption{
{\bf Comparison of our algorithms with REVEALER.}}
\begin{tabular}{llllllll}
		& & \multicolumn{3}{c}{\textbf{ }}  & \\
		\textbf{ }         & \textbf{NFE2L2 activation} & \textbf{MEK-inhibitor} & \textbf{KRAS essentiality} & \textbf{ $\beta$-catenin activation} \\ 
		\textbf{ILP solution}                   & KEAP1.MC MUT          &  BRAF.V600E MUT          &   KRAS.G12 13 MUT & APC.MC MUT                       \\ 
		\textbf{}           & ATP11B AMP          &  KRAS.G12 13 MUT          &  ZNF385B AMP  & CTNNB1.MC MUT                      \\
		\textbf{}                     & SPINT4 DEL          &  NRAS MUT          &   ATP8A2 AMP       & SLITRK1 AMP                \\ 
		\textbf{}              &           &            & C8orf22 AMP           &                      \\
		\textbf{Objective value}                   & 46.17          &  108.32          &    28.00    & 22.97                    \\
		\textbf{IC score}                    & 0.58          &     0.49       &     0.63      & 0.67                \\
		\textbf{p-value}                    & 0.000999          &   0.000999         &   0.025974     & 0.1068931                   \\
		\textbf{Running time (s)}                  & 14          &   39         &  9     &  9                     \\ 
		
		\textbf{Greedy solution}                  & KEAP1.MC MUT          &  BRAF MUT          &   KRAS.G12 13 MUT         & APC.MC MUT                \\ 
		\textbf{}                   & ATP11B AMP          &  KRAS.G12 13 MUT          &  ZNF385B AMP     & CTNNB1.MC MUT                     \\
		\textbf{}                    & SPINT4 DEL          &  NRAS MUT          &    ATP8A2 AMP     & SLITRK1 AMP                 \\ 
		\textbf{}                   &           &            &  C8orf22 AMP       &                    \\
		\textbf{Objective value}                   & 46.17          &  104.29          &    28.00        & 22.97                \\
		\textbf{IC score}                    & 0.58          &     0.5       &     0.63       & 0.67               \\
		\textbf{Running time (s)}                  & 15          &   35         &  9          &  8                \\ 
		
		\textbf{REVEALER  solution}                 & KEAP1.MC MUT          &  BRAF MUT          &   KRAS.G12 13 MUT              & APC.MC MUT            \\ 
		\textbf{}                    & LRP1B DEL          &  KRAS.G12 13 MUT          &  ZNF385B AMP     & CTNNB1.MC MUT                    \\
		\textbf{}                    &  OR4F13P AMP          &  NRAS MUT          &   LINC00340 DEL    & ITGBL1 AMP                    \\ 
		\textbf{}                   &           &            & NUP153 MUT             &               \\
		\textbf{Objective value}                    &   30.35        &  104.29          &    21.86    & 22.12                   \\
		\textbf{IC score}                   & 0.54          &     0.5       &     0.6        & 0.7               \\
		\textbf{Running time (s)}                  &  1615         &  4965          &  1243        &  787                  \\ 
	\end{tabular}
\begin{flushleft} For each of the four targets (NFE2L2 activation, MEK-inhibitor, KRAS essentiality, $\beta$-catenin activation) considered in~\cite{revealer}, the set of alterations of cardinality $k$ reported by our ILP algorithm, by our greedy algorithm, and by REVEALER (without seeds) is reported. $k$ is chosen as in~\cite{revealer}. For each pair (algorithm, target) we also report the (objective) value of our objective function for the solution, the value of the IC score (that is, the objective function used in~\cite{revealer}), and the running time of the algorithm for the target. For solutions found by our ILP we also report the $p$-value computed by permutation test using 1000 permutations.
\end{flushleft}
\label{Table1}
\end{adjustwidth}
\end{table}

We then compared the solutions obtained by our algorihtms with the solutions from REVEALER in terms of the \emph{information coefficient} (IC), that is the target association score used in~\cite{revealer} as a quality of the solution. Surprisingly, in two out of four datasets \algname, which does not consider the IC score, identifies solutions with IC score \emph{higher} (by at least $5\%$) than the solutions reported by REVEALER. For the other two cases, in one dataset the IC score is very similar (0.50 vs 0.49) while in the other case the IC score by REVEALER is higher (0.7 vs 0.67) but the solution reported by REVEALER differs from the solution reported by \algname\ by 1 gene only.
Interestingly, the latter is the only case where the solution from the ILP has a $p$-value $>0.1$ ($p<0.03$ in all other cases), and therefore the solutions (by our methods and by REVEALER) for such dataset may be, at least in part, due to random fluctuations of the data.

In most cases the solutions by \algname\ and by REVEALER are very similar, with cancer relevant genes identified by both methods. For NFE2L2 activation, both methods identify KEAP1, a repressor of NFE2L2 activation~\cite{pmid20534738}. For MEK-inhibitor, both methods find BRAF, KRAS, and NRAS, three well knwon oncogenic activators of the MAPK signaling pathway, which contains MEK as well. For KRAS essentiality, both methods report mutations in KRAS in the solution. For $\beta$-catenin activation, both methods identify CTNNB1 mutations and APC mutations, that is known to be associated to $\beta$-catenin activation~\cite{pmid21859464}.
These results show that \algname\ identifies relevant biological solutions that are better than the ones identified by REVEALER when evaluated using our objective function \emph{and} also when evaluated according to the objective function of REVEALER with a running time that is on average two orders of magnitude smaller than required by REVEALER.

\subsection*{Results on simulated data}

    For each combination we generated 10 simulated dateset as described in Materials and methods. Each dataset contains a \emph{planted} set of 5 alterations associated with the target. We used both the greedy algorithm and the ILP from \algname\  with $k=5$ to attempt to find the 5 correct alteration, and evaluated our algorithms both in terms of fraction of the correct (i.e., planted) solution reported and running time.\\
    
 As shown in Fig.~\ref{fig:runtime}, the greedy algorithm is faster than the ILP for all datasets, and the difference in running time increases as the number $m$ of samples increases, with the runtime of the greedy algorithm being almost two order of magnitude smalle than the runtime of the ILP for $m=1000$ samples. In addition, for a fixed number of samples and alterations, the running time of the greedy algorithm is constant, that is it does not depend on the properties of the planted solution, while the running time of the ILP varies greatly depending on these parameters. For $m=10,000$ samples the running time of the ILP becomes extremely high, so we restricted to consider only two sets of paramters ($p-n=0.95$ and $p-n=0.2$). In this case the ILP took between 44 minutes and 7 hours to complete, while the greedy algorithm terminates in 5 minutes.
 
 In terms of the quality of the solutions found, as expected the ILP outperforms the greedy (Fig.~\ref{fig:solutionquality}) but the difference among the two tends to disappear when the number of samples is higher. In addition, since the ILP finds the optimal solution, we can see that for a limited number of samples we may not reliably identify the planted solution with 200 samples unless the planted solution appears almost only in positive targets and in almost all of them ($p-n=0.95$), while for m=$1000$ we can reliably identify the planted solution using both the ILP and the greedy algorithm even when the association with the target is weaker ($p-n=0.6)$. When $m=10,000$, both the ILP and the greedy algorithm perform well in terms of the quality of the solution: the ILP finds the correct alterations on every experiment and the greedy identifies the whole planted solution in all experiments but one for $p-n=0.2$, for which it still reports a solution containing 4 genes in the planted solution.
 
 These results show that for a large number of samples the greedy algorithm reliably identifies sets of alterations associated with the target, as predicted by our theoretical analysis, and is much faster than the ILP. For smaller sample size the ILP identifies better solutions than the greedy and has a reasonable running time.

 \begin{figure}[t!]
	\centering
		\includegraphics[width=\textwidth]{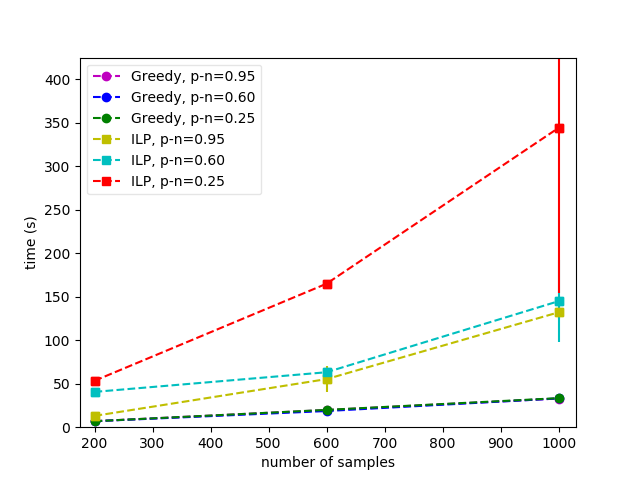}
		\caption{\textbf{Running time of \algname on simulated data.}\\
		The running time (expectation and standard deviation) of the greedy algorithm and of the ILP approach are shown for different number of samples and the difference $p-n$ between the percentage $p$ of samples with positive target and  the percentage $n$ of samples with negative target covered by the the correct solution.}
	\label{fig:runtime}
\end{figure}

\begin{figure}[t!]
	\centering

		\includegraphics[width=\textwidth]{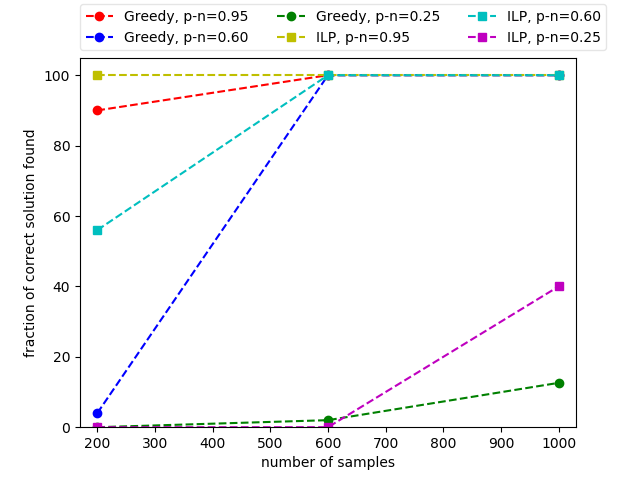}
		\caption{\textbf{Quality of solutions of  \algname on simulated data.}\\
		The fraction of genes in the planted (i.e., correct) solution found by the  greedy algorithm and by the ILP approach are shown for different number of samples and the difference $p-n$ between the percentage $p$ of samples with positive target and  the percentage $n$ of samples with negative target covered by the the correct solution.}
	
	\label{fig:solutionquality}
\end{figure}
 
\subsection*{Analysis of Achilles project data}
\label{sec:achilles}

The efficiency of \algname\ renders the analysis of a large number of targets, such as the ones available through the Achilles project, possible. After preprocessing the dataset included 5690 functional phenotypes as targets, and for each of these the CCLE provides alteration information for 205 samples and 31137 alterations. In total we have therefore run 10380 instances (i.e., 5690 targets screened for positive and for negative associations) looking for both positive and negative association with target values. Since the number of samples (205) is relatively small, we have run only the ILP from \algname\ on the whole Achilles dataset and looked for solutions with $k=3$ genes.
The runtime of \algname\ to find both positive and negative associations, including preprocessing, is 24 hours. Based on the runtime required on the instances reported in~\cite{revealer} (see the Comparison with REVEALER section), running REVEALER on this dataset would have required about 5 months of compute time.

To identify statistically significant associations with targets in the Achilles project dataset we used a nested permutation test. We first run the permutation test with 10 permutations on all instances (i.e., on all targets for both positive association and negative association). We then considered all the instances with the lowest p-value (1/11) and performed a permutation test with 100 permutations only for such instances. We the iterated such procedure once more, selecting all the instances with lowest p-value (1/101) and performing a permutation test with 1000 permutations only for such instances.
For positive association we found 60 solutions with $p$-value $< 0.001$, and for negative association we found 102 solutions with $p$-value $< 0.001$. The solutions with $p$-value $< 0.001$ (with 1000 permutations) are reported in Supplementary Table~\ref{table2} and ~\ref{table3}. See Supplemental Fig.~\ref{fig:s2} for some corresponding alteration matrices.

The genes in the solutions by \algname\ with p-value $1/1001$ are enriched ($p=2\times 10^{-12}$ by Fisher exact test) for well-known cancer genes, as reported in~\cite{vogelstein2013cancer}. We also tested whether genes in solutions by \algname\ (with p-value $1/1001$) are enriched for interactions, by comparing the number of interactions in \texttt{iRefIndex}~\cite{pmid18823568} among genes in such solution with the number of interactions in random sets of genes of the same cardinality. Genes in solutions by \algname\ are significantly enriched in interactions ($p=7\times 10^{-3}$ by permutation test). In addition, the genes in solutions by \algname\ are also enriched in genes in well-known pathways: 12 KEGG pathways~\cite{pmid27899662} have a significant (corrected $p \le 0.05$) overlap with genes in solutions by \algname\ and four of these (endometrial cancer, glioma, hepatocellular carcinoma, EGFR tyrosine kinase inhibitor resistance) are cancer related pathways. In addition, the \emph{targets} (i.e., genes) with solutions of $p$-value $1/1001$ are enriched ($p=10^{-3}$ by permutation test) for interactions in \texttt{iRefIndex} and for well-known cancer genes as reported in~\cite{vogelstein2013cancer}. These results show that \algname\ enables the identification of groups of well known cancer genes with significant associations to important targets in large datasets of functional target profiles. For example, for target (i.e., silenced gene) TSG101, related to cell growth, \algname\ identifies the gene set shown in Figure~\ref{fig:TSG101b} as associated to reduced cell viability. ERBB2 is a well known cancer gene and CDH4 is frequently mutated in several cancer types, and both are associated to cell growth.

\begin{figure}[t!]
	\centering
		\includegraphics[width=0.8\textwidth]{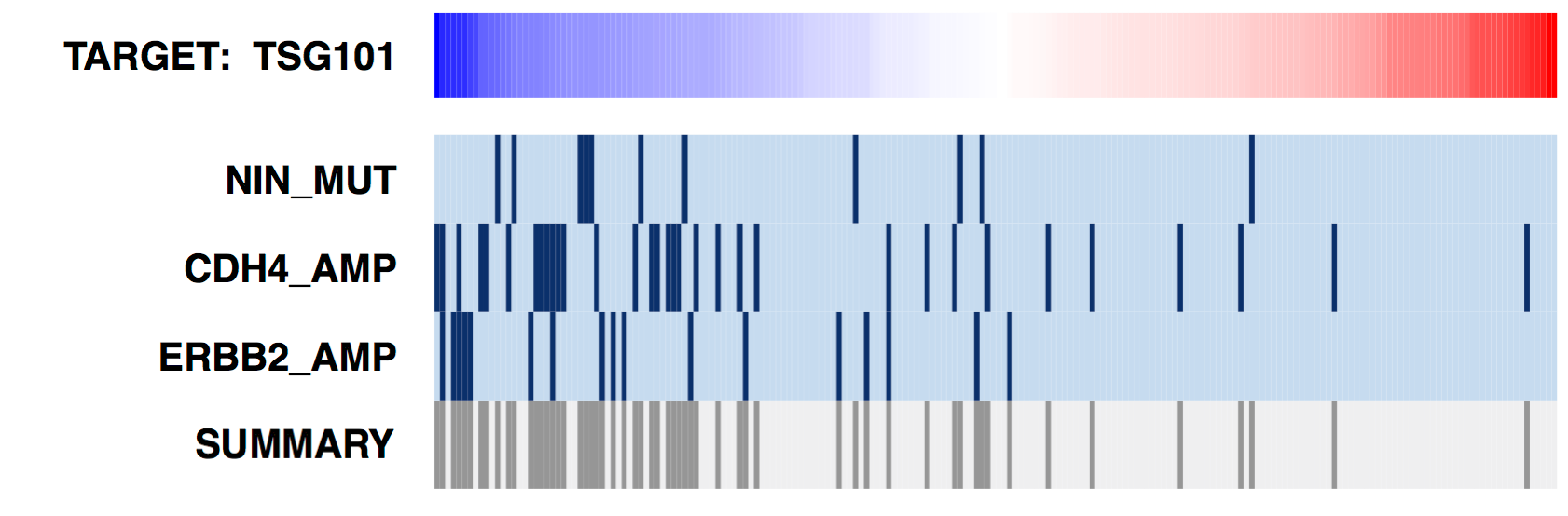}
		\caption{\textbf{Solution by \algname\ for silencing of TSG101  (data from Achilles Project).}\\The alteration matrix of genes in the solution identified by \algname\ as associated to reduced cell viability is reported. The figure shows the value of the target (top row) for various samples (columns), with blue being negative values (i.e., reduced cell viability) and red being positive values. For each gene in the solution, alterations in each sample are shown in dark blue, while samples not altered are in light blue. The last row shows the alteration profile of the entire solution.}
	\label{fig:TSG101b}
\end{figure}

\section*{Conclusion}
\label{sec:conclusion}

In this work we study the problem of identifying sets of mutually exclusive alterations associated with a quantitative target profile. 
We provide a combinatorial formulation for the problem, proving that the corresponding computational problem is NP-hard. We design two efficient algorithms, a greedy algorithm and an ILP-based algorithm, for the identification of sets of mutually exclusive alterations associated with a target profile. We provide a formal analysis for our greedy algorithm, proving that it returns solutions with rigorous guarantees in the worst-case as well under a reasonable genarative model for the data. We implemented our algorithms in our method  \algname, and showed that it finds sets of alterations with a significant association with target profiles in a variety of scenarios. By comparing the results of UNCOVER with the results of REVEALER on four target profiles used in the REVEALER publication~\cite{revealer}, we show that \algname\ identifies better solutions than REVEALER, even when evaluated using REVEALER objective function. Moreover, \algname\ is much faster than REVEALER, allowing the analysis of large datasets such as the dataset from project Achilles, in which \algname\ identifies a number of associations between functional target profiles and gene set alterations.
Our tool \algname\ (as well as REVELEAR) relies on the assumption that the mutual exclusivity among alterations is due to functional complementarity. Another explanation for mutual exclusivity is the fact that each cancer may comprise different subtypes, with different subtypes being characterized by different alterations~\cite{leiserson2013simultaneous}. \algname\ can be used to identify sets of mutually exclusive alterations associated with a specific subtype whenever the subtype information is available, by assigning high weight to samples of the subtype of interest and low weight to samples of the other subtypes.

\section*{Acknowledgements}
  This work is supported, in part, by NSF grant IIS-124758 and by the University of Padova grants SID2017 and Algorithms for Inferential Data Mining, funded by the STARS program. This work was done in part while the authors were visiting the Simons Institute for the Theory of Computing, supported by the Simons Foundation. A short version of this work was accepted to RECOMB 2018.

\nolinenumbers
%
%
%

\newpage
\appendix
{\huge Supplementary Material}

\begin{proposition}
There are instances of the \problemname\ such that $W(\hat{S}) = W(S^*)/k$.
\end{proposition}
\begin{proof}[Proof]
To see that the bound is tight just consider the following example. We want to pick k sets out of n sets $A_1...A_n$. Sets $A_1...A_k$ include 2 elements of respective weight $a\ge 0$ and $b=a/(k-1)$. Subset $A_{k+1}$ includes all the elements of weight $b$ from the  previous $k$ sets and one element with a small weight $\epsilon$. Each of the remaining sets $A_{k+2}...A_n$ include an arbitrary number of elements with overall weight $\le 0$. We choose a penalty of value $a$. Note that one can choose the weights of elements in sets $A_{k+2}...A_n$ in such a way that the average of all positive normalized weights is equal to $a$. Clearly the optimal solution to the \problemname\ problem consists of sets $A_1...A_k$ with an objective value of $k(a+b)$. The greedy algorithm will pick set $A_{k+1}$ at the first iteration and then assign a new weight to its elements equal to $-a$. The updated weight of sets $A_1...A_k$ is now 0 and the algorithm will stop and output $A_{k+1}$ as the solution, giving an approximation ratio of $$\frac {kb+\epsilon} {k(a+b)} = \frac {1} {k} + \frac {\epsilon} {kb}$$
\end{proof}

\begin{proposition}
If $m \in \BOM{k^2 \ln (n/\delta)}$ samples from the generative model above are provided to the greedy algorithm, then the solution of the greedy algorithm is H with probability $\ge \delta$.
\end{proposition}
\begin{proof}[Proof]
We prove that in iteration $i$ of the greedy algorithm, conditioning on the current solution being a set $S$ with $S \subset H$, then the greedy algorithm adds a gene in $H \setminus S$ to the solution with probability $\ge delta/k$, and that the first gene added by the greedy algorithm is $g \in H$. The result then follows by union bound on the $k$ iterations of the greedy algorithm.

Consider the first iteration of the greedy algorithm and consider a gene $g \in G$. Note that if $g \not\in H$ then $\E[W(\{g\})] \le 0$, since $\E[\sum_{j \in A_g} w_j] =0$ because the samples in which $g$ is mutated are taken uniformly at random while $\sum_{j\in A_g} (c_S(j)-1) \ge 0$. If $g\in H$ by the assumptions of the model we have $\E[W(\{g\})] \ge \frac{m}{k c^{'''}}$ for a constant $c'''\ge 1$. Note that $W(\{g\})$ can be written as the sum $\sum_{i=1}^m X_i$ of random variables (r.v.'s) $X_i$ where  $X_i$ is the contribution of sample $i$ to $W(\{g\})$ with $X_i \in [-1,1]$. By the Azuma-Hoeffding inequality~\cite{mitzenmacher2017probability} and union bound (on the $n$ genes) the first gene chosen by the greedy algorithm is not gene $g \in H$ with probability $\le e^{-\frac{2m^2}{4 m k^2 (c^{'''})^2}}$ which is $\le \delta/k$ when $m \in \BOM{k^2 \ln(nk/\delta))}$.

Now assume that in iteration $i$, for the current solution $S \subset H$.  Consider a gene $g \in G \setminus H$, then $\E[W(S \cup \{g\}) - W(S)] \le 0$, since $\E[\sum_{j \in \cup_{s \in S\cup{g}}{A_s}} w_j - \sum_{j \in \cup_{s \in S}{A_s}} w_j] \le 0 $ (by the assumptions of the model $W(S) > 0$ and the fact that alterations in $\{g\}$ are placed uniformly at random among samples) and $\E[\sum_{j \in \cup_{s \in S\cup{g}}}(c_S(j)-1) - \sum_{j \in \cup_{s \in S}}(c_S(j)-1)] \ge 0$ (because for each sample $i$,  the number of alterations of $S \cup \{g\}$ in $i$ is a superset of the number of alterations of $S$ in $i$). Consider now a gene $g \in H \setminus S$: by the assumptions of the model $\E[W(S \cup \{g\}) - W(S)] \le \frac{m}{k c^{'''}}$ for a constant $c^{'''} > 1$. Note that $\E[W(S \cup \{g\}) - W(S)$ can be written as the sum of $\sum_{i=1}^m X_i$ of random variables (r.v.'s) $X_i$ where $X_i$ is the contribution of sample $i$ in the increase in weight from $W(S)$ to $W(S \cup \{g\})$, where $X_i \in [-1,1]$. By the Azuma-Hoeffding inequality and union bound (on the $< n$ genes considered for addition by the greedy algorithm) the gene $g$ added to $S$ by the greedy algorithm in iteration $i$ is not in $H \setminus S$ with probability $\le e^{-\frac{2m^2}{4 m k^2 (c^{'''})^2}}$ which is $\le \delta/k$ when $m \in \BOM{k^2 \ln(nk/\delta))}$.
\end{proof}

\begin{table}[htbp]
	\centering
	\footnotesize
	\caption{Solutions found when running our ILP algorithm for the Achilles dataset, looking for positive correlation with the target. For each target we report the objective function value for the optimal solution, the set of alterations of cardinality 3 and the p-value computed by permutation test using 1000 permutations}
	\begin{tabular}{p{5.915em}rp{7.835em}p{8.585em}p{8.585em}r}
target       & objective & alterations:      &                   &                   \\
VAMP7        & 50.19     & CLYBL\_AMP        & FBXL17\_DEL       & ARHGEF10\_DEL     \\
VHLL         & 43.87     & ZNF705B\_DEL      & FARS2\_DEL        & ZRANB2\_DEL       \\
TRIM13       & 43.75     & AMY2A\_AMP        & SH3KBP1\_DEL      & LOC649352\_DEL    \\
QDPR         & 43.39     & DAB1\_MUT         & SLX1A\_DEL        & RPL23AP53\_DEL    \\
CD302        & 43.36     & FRG2C\_DEL        & LOC650623\_DEL    & CTDP1\_DEL        \\
DDX4         & 43.23     & LYPD8\_AMP        & TRIM28\_AMP       & SLC18A1\_DEL      \\
ZNF439       & 43.21     & PLCXD3\_AMP       & ITGA7\_AMP        & PMP22\_DEL        \\
PTBP3        & 42.58     & TAOK3\_MUT        & MRPS30\_AMP       & ALOX15\_DEL       \\
SRGAP3       & 42.42     & FAM66B\_AMP       & SCAND3\_AMP       & C17orf101\_AMP    \\
C10orf10     & 42.12     & HLA-A\_DEL        & CDH12\_AMP        & PTPRT\_DEL        \\
TNIP1        & 42.10     & LINC00547\_AMP    & XG\_AMP           & LOC729732\_DEL    \\
PAWR         & 42.02     & EPSTI1\_AMP       & GPLD1\_DEL        & DLGAP2\_DEL       \\
PRDM5        & 41.99     & ADAD2\_DEL        & HSD17B10\_DEL     & IMMP2L\_DEL       \\
TXNDC5 & 41.90     & MEP1B\_DEL        & SFRP1\_DEL        & ZNF385D\_DEL      \\
TXNDC5       & 41.90     & MEP1B\_DEL        & SFRP1\_DEL        & ZNF385D\_DEL      \\
ACSL3        & 41.85     & RB1\_MUT          & PLXNA4\_AMP       & PITPNM3\_DEL      \\
ANAPC2       & 41.78     & HERC2P3\_DEL      & LCE1D\_AMP        & MAP1LC3A\_AMP     \\
CNDP2        & 41.76     & EPSTI1\_AMP       & PITPNA\_DEL       & FAM86B2\_DEL      \\
MDM4         & 41.75     & LY86-AS1\_DEL     & THSD7B\_AMP       & LINC00583\_DEL    \\
ADCYAP1R1    & 41.69     & HEATR4\_AMP       & EPB41L4A\_DEL     & DLGAP2\_DEL       \\
NMBR         & 41.64     & LOC100506136\_AMP & AP4S1\_AMP        & PSPC1\_DEL        \\
PPP2R2D      & 41.64     & LZTS1\_DEL        & NF1P2\_DEL        & QKI\_DEL          \\
SLC39A10     & 41.64     & OBSCN\_MUT        & LOC100302640\_AMP & SGSM2\_DEL        \\
GK           & 41.61     & DST\_MUT          & ADCY8\_DEL        & ARHGAP44\_DEL     \\
EIF4E        & 41.36     & GUSBP9\_AMP       & PON2\_AMP         & CAB39L\_DEL       \\
SLC31A1      & 41.26     & FAM66E\_DEL       & TMEM132C\_DEL     & CD83\_DEL         \\
PLS3         & 41.24     & LOC154872\_AMP    & FAM60A\_AMP       & KRT16P2\_DEL      \\
CCT5         & 41.20     & RASA4\_DEL        & GUSBP1\_AMP       & KCNQ5\_DEL        \\
FFAR2        & 41.17     & PTPRT\_DEL        & ERICH1\_DEL       & NBEAP1\_DEL       \\
OR4K17       & 41.15     & THSD7B\_AMP       & C6orf201\_DEL     & FAM86B2\_DEL      \\
TCEB1        & 41.09     & DENND5B\_AMP  & SLC1A3\_AMP       & SLIT3\_DEL        \\
FGD1         & 41.08     & MLL3\_MUT         & MIR19B1\_AMP      & MSR1\_DEL         \\
HNRNPH3      & 40.99     & PIK3CA\_MUT       & BCHE\_AMP         & PRDM2\_DEL        \\
EFHB         & 40.93     & FANCM\_MUT        & WDR7\_DEL         & RIMS2\_DEL        \\
MIF          & 40.90     & FRG2C\_DEL        & LINC00340\_DEL    & PPP3CC\_DEL       \\
PRKY         & 40.85     & ATP11A\_MUT       & TCEB3C\_DEL       & NFIB\_DEL         \\
MGAT4C       & 40.81     & EBF2\_DEL         & ZNF132\_DEL       & KRAS.G12\_13\_MUT \\
RCN2         & 40.54     & MYLK\_MUT         & PIK3C2G\_MUT      & FAM86B1\_DEL      \\
CSE1L        & 40.49     & LY86-AS1\_DEL     & COL1A2\_AMP       & GLIPR2\_DEL       \\
DGKG         & 40.23     & MEP1B\_DEL        & PARD3B\_DEL       & MEX3C\_DEL        \\
MST1R        & 40.20     & SLC6A10P\_DEL     & GOLPH3L\_AMP      & OR52N5\_DEL       \\
TLR4         & 39.91     & C18orf61\_AMP     & PTEN\_DEL         & LOC728875\_DEL    \\
OR4D11       & 39.85     & CREBBP\_MUT       & MIR4796\_AMP      & LOC340357\_DEL    \\
CD47         & 39.71     & STK3\_AMP         & LOC728190\_AMP    & NLRP1\_DEL        \\
RNF183       & 39.71     & GPR112\_MUT       & IFLTD1\_AMP       & LOC340357\_DEL    \\
SULT1A3      & 39.68     & LY86-AS1\_DEL     & CACNA1D\_DEL      & SEC24D\_DEL       \\
SULT1A4      & 39.68     & LY86-AS1\_DEL     & CACNA1D\_DEL      & SEC24D\_DEL       \\
FGG          & 39.66     & H3F3C\_AMP        & MSRA\_DEL         & PSG5\_DEL         \\
NFAT5        & 39.40     & LOC286184\_AMP    & PCTP\_AMP         & MIR4744\_DEL      \\
ANKRD5       & 39.40     & COL14A1\_AMP      & CXADRP2\_DEL      & OPALIN\_DEL       \\
MGP          & 39.37     & TFAP2D\_AMP       & GAL\_AMP          & LOC728323\_DEL    \\
PDE12        & 39.32     & HTR3C\_AMP        & GALNTL2\_DEL      & APC.MC\_MUT       \\
RAB31        & 39.31     & IGLL5\_AMP        & FAM106CP\_DEL     & PACRG\_DEL        \\
ACTR6        & 39.09     & DDAH1\_AMP        & SNORD115-6\_DEL   & LOC340357\_DEL    \\
PGLS         & 38.98     & LOC146880\_AMP    & EYA1\_DEL         & GUCY1A3\_DEL      \\
MAP3K1       & 38.97     & CDH13\_DEL        & ME1\_DEL          & SNTG2\_DEL        \\
SERPINA12    & 38.84     & FLNA\_MUT         & FAM22A\_AMP       & MSR1\_DEL         \\
RWDD2A       & 38.79     & DNAH5\_AMP        & ESR1\_AMP         & COLEC12\_DEL      \\
FAT2         & 38.10     & LDLRAD3\_AMP      & TAC1\_AMP         & LOC440040\_DEL    \\
WSB2         & 36.81     & MLL3\_MUT         & CDH6\_AMP         & CLK2\_AMP
\end{tabular}%
	\label{table2}%
\end{table}

\begin{table}[htbp]
	\centering
	\footnotesize
	\caption{Solutions found when running our ILP algorithm for the Achilles dataset, looking for negative correlation with the target. For each target we report the objective function value for the optimal solution, the set of alterations of cardinality 3 and the p-value computed by permutation test using 1000 permutations}
	\begin{tabular}{p{5.915em}rp{7.835em}p{8.585em}p{8.585em}r}

target       & objective & alterations:   &                   &                   \\
KRAS         & 55.22     & KRAS\_MUT      & TUBB8\_AMP        & LRRC37A2\_DEL     \\
COG2         & 50.11     & CRYZ\_DEL      & CCNY\_DEL         & KRAS.G12\_13\_MUT \\
SF3B3        & 46.93     & GPR112\_MUT    & SDHAP1\_AMP       & EIF2C2\_AMP       \\
PIK3CA       & 45.68     & PIK3CA\_MUT    & PRAMEF10\_DEL     & TMEM232\_DEL      \\
BRAF         & 45.65     & BRAF\_MUT      & TPGS2\_DEL        & TRPS1\_DEL        \\
PRPF31       & 45.37     & FBN1\_MUT      & COL22A1\_AMP      & MUC4\_AMP         \\
LOC100131735 & 45.18     & TRPS1\_MUT     & ADCY8\_AMP        & HIPK4\_AMP        \\
SF3A3        & 45.07     & MLL3\_MUT      & KHDRBS3\_AMP      & GPRC5D\_DEL       \\
POLR2A       & 44.99     & EIF2C2\_AMP    & FYTTD1\_AMP       & LOC100506393\_DEL \\
GTF2F1       & 44.92     & KIAA1409\_MUT  & MUC4\_AMP         & FAM135B\_AMP      \\
H1FNT        & 44.19     & CUBN\_MUT      & THSD7B\_AMP       & SLC39A14\_DEL     \\
EIF3I        & 43.88     & MSN\_MUT       & ZFR\_AMP          & MBD2\_DEL         \\
BCLAF1       & 43.43     & DLC1\_MUT      & URB2\_MUT         & FAM83H\_AMP       \\
HAUS1        & 43.33     & ABHD12\_AMP    & HARBI1\_DEL       & MBD2\_DEL         \\
VARS         & 43.13     & MSN\_MUT       & CTNND2\_AMP       & MBD2\_DEL         \\
HGS          & 43.10     & NOSIP\_AMP     & LOC100506990\_DEL & FARS2\_DEL        \\
RPS12        & 43.03     & KHDRBS3\_AMP   & ARL11\_AMP        & TLL2\_DEL         \\
TRIM26       & 42.94     & LRP1B\_MUT     & GSG1L\_DEL        & RUNX1\_DEL        \\
EIF2B3       & 42.87     & ATP7B\_AMP     & RHPN2\_AMP        & MSRA\_DEL         \\
CAND1        & 42.85     & NOTCH1\_MUT    & SLC6A13\_AMP      & KBTBD11\_DEL      \\
TXNL4A       & 42.80     & TXNRD2\_AMP    & PTPN14\_AMP       & CTDP1\_DEL        \\
OC90         & 42.78     & DHRS4L2\_AMP   & SLC6A13\_AMP      & C8orf42\_DEL      \\
UGGT1        & 42.64     & ZNF385B\_DEL   & NLGN1\_DEL        & C18orf26\_DEL     \\
SRRM1        & 42.45     & PIK3CA\_MUT    & SGK2\_AMP         & CELA3A\_DEL       \\
PSMD12       & 42.26     & ZNF286B\_AMP   & GAGE2E\_DEL       & MBD2\_DEL         \\
SNRPF        & 42.25     & PRKCG\_MUT     & COL22A1\_AMP      & MUC4\_AMP         \\
OSR2         & 42.22     & MRPS30\_AMP    & CHEK2P2\_DEL      & TMEM11\_DEL       \\
BUB1B        & 42.18     & FBXO45\_AMP    & EIF2C2\_AMP       & POTEC\_AMP        \\
ADSL         & 41.95     & PARP1\_MUT     & MIR19B1\_AMP      & MIR596\_DEL       \\
ALDH9A1      & 41.88     & LOC728323\_DEL & PRAMEF14\_DEL     & KRAS.G12\_13\_MUT \\
C1QA         & 41.87     & TPPP\_AMP      & DERA\_AMP         & TMEM11\_DEL       \\
RNF40        & 41.86     & KALRN\_MUT     & NF1P2\_DEL        & FAM86B2\_DEL      \\
MEST         & 41.76     & KRAS\_MUT      & EIF3B\_DEL        & DHX38\_DEL        \\
RPAP1        & 41.63     & EIF2C2\_AMP    & SHKBP1\_AMP       & LRRTM4\_DEL       \\
PRPF8        & 41.57     & ITIH5L\_MUT    & PHF20L1\_AMP      & SLC43A2\_DEL      \\
POLR2E       & 41.56     & FBN1\_MUT      & MUC4\_AMP         & FAM135B\_AMP      \\
ABCB7        & 41.50     & ARID1A\_MUT    & RNU6-78\_AMP      & ZNF623\_AMP       \\
POLR2F       & 41.38     & GPR112\_MUT    & EIF2C2\_AMP       & MUC4\_AMP         \\
APLP1        & 41.37     & C3orf33\_AMP   & ERGIC3\_AMP       & CCDC146\_AMP      \\
TSG101       & 41.21     & NIN\_MUT       & CDH4\_AMP         & ERBB2\_AMP        \\
CIAO1        & 41.20     & ZNF705G\_AMP   & MIR3914-1\_AMP    & TBC1D16\_AMP      \\
EIF2B5       & 41.17     & FBN1\_MUT      & SAMD12\_AMP   & PDCD5\_AMP        \\
RANBP2       & 41.09     & KRAS\_MUT      & TRIM49\_AMP       & LOC731275\_AMP    \\
LOC402207    & 41.00     & CTNNB1\_MUT    & LOC284100\_AMP    & LOC649352\_DEL    \\
TOMM40       & 40.94     & EIF2C2\_AMP    & GNAQ\_AMP         & MUC4\_AMP         \\
POLD1        & 40.92     & MLL3\_MUT      & FAM83H\_AMP       & ZNF91\_DEL        \\
LIG4         & 40.91     & TFRC\_AMP      & PLA2G4F\_DEL      & CSGALNACT1\_DEL   \\
MYCBP2       & 40.90     & FLT1\_MUT      & LOC642426\_AMP    & LINC00305\_DEL    \\
SOD1         & 40.82     & ODZ1\_MUT      & ACOT1\_AMP        & LOC729732\_DEL    \\
MAPK4        & 40.77     & NIPBL\_MUT     & FOXP1\_DEL        & PITPNM3\_DEL      \\
TWISTNB      & 40.76     & FGFR2\_MUT     & FBXO32\_AMP       & KRT16P2\_AMP      \\
RBMX         & 40.73     & FBN1\_MUT      & HPYR1\_AMP        & HS1BP3\_DEL       \\
HBG1         & 40.63     & KRAS\_MUT      & FRY\_AMP          & LOC148145\_AMP    \\
EIF2B4       & 40.58     & SGK3\_MUT      & SNORA4\_AMP       & MBD2\_DEL         \\
RBM47        & 40.49     & MIR4470\_AMP   & LOC284100\_AMP    & SETBP1\_DEL       \\
SF3A2        & 40.41     & SETDB1\_MUT    & LRRN3\_DEL        & FAM90A2P\_DEL     \\
NUDT1        & 40.39     & YY1AP1\_AMP    & ANK1\_DEL         & NRN1\_DEL         \\
EIF3F        & 40.30     & LRGUK\_MUT     & EIF2C2\_AMP       & MUC4\_AMP         \\
RPS3         & 40.29     & PIK3CA\_MUT    & LCE1C\_AMP        & MARCH8\_DEL       \\
TUBGCP3      & 40.28     & FCGR1C\_AMP    & URI1\_AMP         & CLDN10\_DEL    

\end{tabular}%
	\label{table3}%
\end{table}%

\begin{table}[htbp]
	\centering
	\footnotesize
	\caption{continues}
	\begin{tabular}{p{5.915em}rp{7.835em}p{8.585em}p{8.585em}r}
target       & objective & alterations:   &                   &                   \\
PRKRIR   & 40.23 & KIAA1549\_MUT  & PCAT1\_AMP        & FAM75A1\_DEL       \\
SF3B1    & 40.21 & FBXO32\_AMP    & ARID3B\_DEL       & FBXW7\_DEL         \\
DDB1     & 40.14 & RGS22\_MUT     & THEG5\_AMP        & WDR7\_DEL          \\
SNRPB2   & 40.08 & ERCC6\_MUT     & GET4\_AMP         & DEFB109P1\_DEL     \\
BRF2     & 40.07 & KRAS\_MUT      & CNTN5\_AMP        & NBEAP1\_DEL        \\
POLR2C   & 39.98 & FLNA\_MUT      & COL22A1\_AMP      & COMMD6\_AMP        \\
POLR2D   & 39.98 & C8orf31\_AMP   & PRKRIR\_AMP       & MBD2\_DEL          \\
PI4KA    & 39.97 & KRAS\_MUT      & MIR624\_AMP       & OR4L1\_DEL         \\
ADRA1B   & 39.96 & ODZ1\_MUT      & PAPL\_AMP         & GTF2E2\_DEL        \\
RFX2     & 39.93 & OBSCN\_MUT     & TBL1XR1\_AMP      & FAM18B2\_DEL \\
SFPQ     & 39.93 & USP25\_AMP     & TRIM49\_AMP       & PYCRL\_AMP         \\
PSMD13   & 39.92 & NIN\_MUT       & VAPB\_AMP         & KIAA0825\_DEL      \\
KRR1     & 39.87 & EVPLL\_AMP     & TNXB\_AMP         & MBD2\_DEL          \\
LGI1     & 39.86 & SIK1\_AMP      & CDC73\_AMP        & FAM221A\_DEL       \\
COPZ1    & 39.83 & EIF2C2\_AMP    & MUC4\_AMP         & ZNF91\_DEL         \\
PDXK     & 39.76 & COX10-AS1\_DEL & SNTG2\_DEL        & OR4K5\_DEL         \\
COPS2    & 39.75 & TRMT12\_AMP    & BMPR2\_AMP        & STS\_DEL           \\
LSM3     & 39.72 & ZC3H3\_AMP     & MUC4\_AMP         & EXOC4\_DEL         \\
APOBEC3G & 39.48 & SNAR-D\_AMP    & TPPP\_AMP         & APC.MC\_MUT        \\
TCOF1    & 39.39 & HCN1\_AMP      & PTPRT\_DEL        & PRR5\_DEL  \\
RAB19    & 39.38 & PIK3CA\_MUT    & FLJ31813\_DEL     & MTHFD1\_DEL        \\
ZNF433   & 39.38 & PCDH15\_MUT    & DPF2\_AMP         & ATP2A3\_DEL        \\
OXSM     & 39.26 & TSHZ3\_AMP     & C17orf101\_AMP    & ELAC1\_DEL         \\
AP3M1    & 39.16 & SMARCA4\_MUT   & CALM1\_AMP        & MBD2\_DEL          \\
PLRG1    & 39.14 & FBN1\_MUT      & MUC4\_AMP         & TSTA3\_AMP         \\
SRP9     & 39.09 & EDN3\_AMP      & RSBN1L\_AMP       & NF2\_DEL           \\
RPS27A   & 39.07 & NOS2\_MUT      & TPPP\_AMP         & C17orf51\_DEL      \\
USPL1    & 39.01 & LOC643401\_AMP & TOMM20\_AMP       & ARL8B\_DEL         \\
ARHGEF3  & 38.99 & ZNF423\_MUT    & ST6GAL1\_DEL      & FBXO31\_DEL        \\
SNW1     & 38.69 & MLLT3\_MUT     & WRN\_MUT          & KIFC2\_AMP         \\
OTUD7A   & 38.62 & DST\_MUT       & PTPN21\_AMP       & MIR497HG\_DEL      \\
FUT6     & 38.43 & ACLY\_AMP      & FAM190A\_DEL      & ADAMTSL3\_DEL      \\
SLC25A20 & 38.41 & DOCK3\_MUT     & ZNF536\_AMP       & FHOD3\_DEL         \\
EFTUD2   & 38.32 & FBN1\_MUT      & MUC4\_AMP         & FAM135B\_AMP       \\
IFT27    & 38.31 & SVIL\_AMP      & C18orf26\_DEL     & HTT\_DEL           \\
PRKACG   & 37.98 & UBE3B\_MUT     & LOC100287314\_AMP & ZNF705B\_DEL       \\
RUFY1    & 37.86 & AHCY\_AMP      & PRICKLE4\_AMP     & NETO1\_DEL         \\
DHX15    & 37.83 & KIAA2022\_MUT  & SNAR-I\_AMP       & SERPINB11\_DEL     \\
NCBP2    & 37.77 & GPR112\_MUT    & FCGR3B\_AMP       & FAM135B\_AMP       \\
ESPL1    & 36.84 & GPC6\_AMP      & CD226\_DEL        & KALRN\_DEL         \\
RPL31    & 36.09 & PIK3CA\_MUT    & LOC643401\_AMP    & FLJ23152\_DEL      \\
NHP2L1   & 35.98 & CEP72\_AMP     & CT45A3\_DEL       & COLEC12\_DEL    

\end{tabular}%
	\label{tab:addlabel}%
\end{table}

\begin{figure*}[h]
    \centering
    \includegraphics[width=\textwidth]{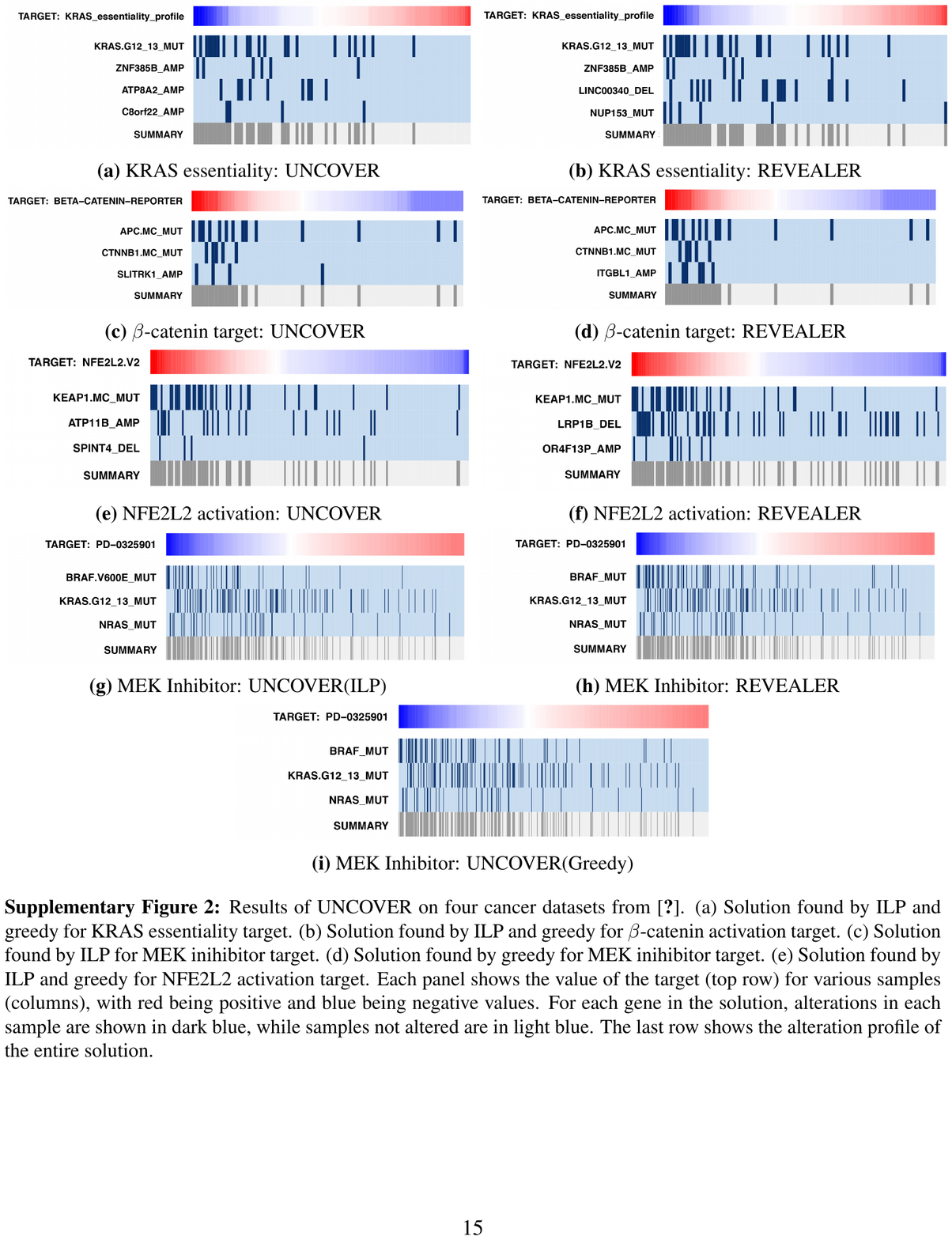}
    \caption{\textbf{Results of \algname\ and REVEALER on four cancer datasets from~\cite{revealer}.}\\ (a) Solution found by ILP and greedy for KRAS essentiality target. (b) Solution found by ILP and greedy for $\beta$-catenin activation target. (c) Solution found by ILP for MEK inihibitor target. (d) Solution found by greedy for MEK inihibitor target. (e) Solution found by ILP and greedy for NFE2L2 activation target. Each panel shows the value of the target (top row) for various samples (columns), with red being positive and blue being negative values. For each gene in the solution, alterations in each sample are shown in dark blue, while samples not altered are in light blue. The last row shows the alteration profile of the entire solution.
    }
    \label{fig:s1}
\end{figure*}

\begin{figure*}[h]
    \centering
    \includegraphics[width=0.7\textwidth]{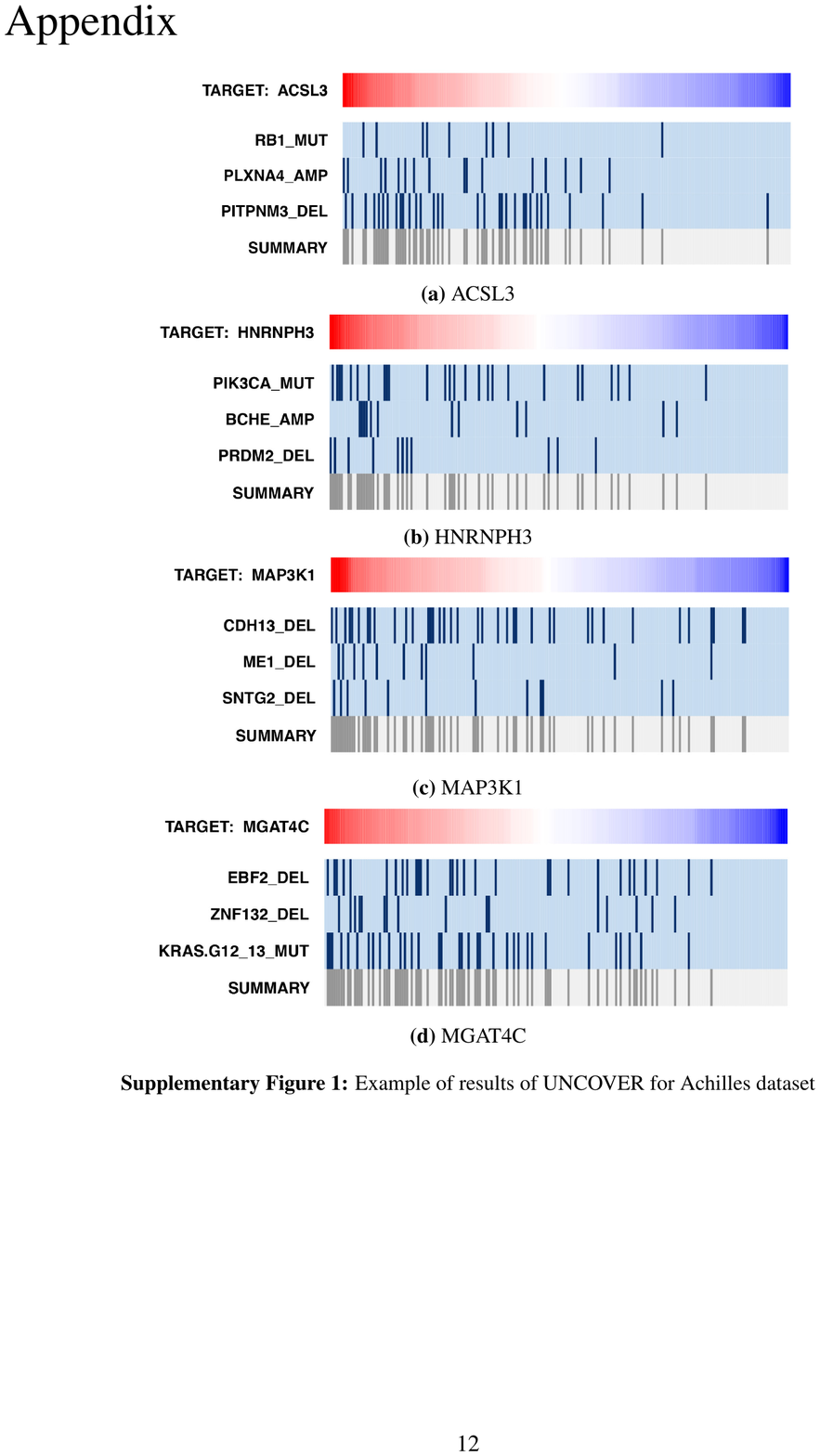}
    \caption{\textbf{Alteration matrices for results of \algname\ the Achilles Project data.}
    }
    \label{fig:s2}
\end{figure*}

\end{document}